\newtheorem{lemma}{Lemma}
\newtheorem{theorem}{Theorem}
\newtheorem{cor}{Corollary}
\def\BState{\State\hskip-\ALG@thistlm}
\algnewcommand\algorithmicinput{\textbf{\ \ \ \ \ \ \ Input:}}
\algnewcommand\INPUT{\item[\algorithmicinput]}
\algnewcommand\algorithmicoutput{\textbf{\ \ \ \ \ \ \ Output:}}
\algnewcommand\OUTPUT{\item[\algorithmicoutput]}
\algnewcommand\algorithmicglobal{\textbf{\ \ \ \ \ \ \ Global Vars:}}
\algnewcommand\GLOBAL{\item[\algorithmicoutput]}
\title{Capacitated Dominating Set on Planar Graphs}
\author{Amariah Becker}
\begin{document}
\maketitle

\begin{abstract}
{\sc Capacitated Domination} generalizes the classic {\sc Dominating Set} problem by specifying for each vertex a required demand and an available capacity for covering demand in its closed neighborhood.  The objective is to find a minimum-sized set of vertices that can cover all of the graph's demand without exceeding any of the capacities. In this paper we look specifically at domination with hard-capacities, where the capacity and cost of a vertex can contribute to the solution at most once.  Previous complexity results suggest that this problem cannot be solved (or even closely approximated) efficiently in general.  In this paper we present a polynomial-time approximation scheme for {\sc Capacitated Domination} in unweighted planar graphs when the maximum capacity and maximum demand are bounded.  We also show how this result can be extended to the closely-related {\sc Capacitated Vertex Cover} problem.

\end{abstract}

%______________________INTRO___________________________________
\section{Introduction}\label{sec:intro}

The {\sc Dominating Set} problem is a classic NP-complete optimization problem defined as follows.  For an undirected graph $G$ with vertex set $V$, a dominating set $S$ is a subset of $V$ such that every vertex $v\in V$ is either in $S$ or adjacent to a vertex in $S$.  The problem seeks a dominating set such that $|S|$ is minimized.  The {\sc Dominating Set} problem has several applications in resource allocation~\cite{Fujita} and social network theory~\cite{Wang}.

An $\alpha$-approximation for a minimization problem ($\alpha \geq 1$) is a polynomial-time algorithm that returns a solution whose cost is within an $\alpha$ factor of the optimum value.  A polynomial-time approximation scheme (PTAS) exists for a problem if for any $\epsilon > 0$ there is a $(1+\epsilon)$-approximation.  An \emph{efficient} polynomial-time approximation scheme (EPTAS) is a PTAS in which for any $\epsilon > 0$ the $(1+\epsilon)$-approximation has an $O(f(\epsilon)n^c)$ runtime in which $c$ is a constant that does not depend on $\epsilon$.  Assuming ${\rm P} \neq {\rm NP}$, a PTAS is in some sense the best solution we can hope to achieve for NP-hard problems in polynomial time.  

For the {\sc Dominating Set} problem in general graphs, an $O(\ln n)$-approximation is known~\cite{johnson}.  It has also been shown that unless ${\rm NP} \subset {\rm TIME}(n^{\log\log n})$ the {\sc Dominating Set} problem cannot be approximated within a factor of $(1-\epsilon)\ln n$~\cite{feige}, thus a PTAS is unlikely to exist.  For restricted graph classes such as unit disk graphs~\cite{nieberg}, bounded-degree graphs~\cite{chlebik}, and planar graphs~\cite{marzban}, improvements on this bound have been found.  In particular, the planar case is known to admit a PTAS by extending the framework established by Baker~\cite{baker} and described below. 

For many optimization problems that have applications in resource allocation, a natural generalization is to consider capacity restrictions on the resources.  For covering-type problems, for example, we may impose capacity limits on the number of times an edge or vertex can be included in the solution or on how much of a commodity can be stored at or move through a location.  Capacitated versions of many common optimization problems have been studied, including {\sc Facility Location}~\cite{melkote}, {\sc Vertex Cover}~\cite{guha}, and {\sc Vehicle Routing}~\cite{ralphs}.  For the {\sc Dominating Set} problem, the capacitated version restricts the \emph{capacity} of vertices (i.e. the number of vertices that can claim a given vertex as their representative in the dominating set).

The capacitated {\sc Dominating Set} problem, often referred to as the {\sc Capacitated Domination} problem (CDP) assigns to each vertex a \emph{demand} value, a \emph{capacity} value, and a \emph{weight} value and seeks a dominating set $S$ of minimum weight such that every vertex in $V$ has its demand \emph{covered} by vertices in $S$, the total amount of demand covered by any vertex in $S$ does not exceed its capacity, and any vertex can only cover itself or its neighbors (i.e. the \emph{closed neighborhood})\footnote{In one common variant of the problem, a vertex can cover its own demand `for free' and the capacity only limits coverage of neighbors.}.  We use $CDP(G,d,c)$ to denote the instance of the {\sc Capacitated Domination} problem on a graph $G$ with demand function $d$ and capacity function $c$.

CDP was shown to be ${\rm W}[1]$-hard for general graphs when parameterized by treewidth~\cite{dom}.  Moreover, this hardness result extends to planar graphs, and even the bidimensionality framework does not apply to this problem~\cite{bodlaender}.

Kao et al. characterize CDP along several axes~\cite{kao}.  The \emph{weighted} version has real, non-negative vertex weights whereas the \emph{unweighted} version has uniform vertex weights of weight one.  The \emph{unsplittable}-demand version requires that the entire demand of any vertex is covered by a single vertex in the dominating set, while the \emph{splittable}-demand version allows demand to be covered by multiple vertices in the dominating set.  The \emph{soft}-capacity version allows multiple copies of vertices to be included in the dominating set (effectively scaling both the capacity and the weight of vertex by any integral amount) whereas the \emph{hard}-capacity version limits the number of copies of each vertex (usually to one).  Lastly, we can specify whether or not the capacities and demands are required to be (non-negative) integral.

For the soft-capacity version, many results are known.  For general graphs, Kao et al. present a $(\ln n)$-approximation for the weighted unsplittable-demand case, a $(4 \ln n + 2)$-approximation for the weighted splittable-demand case, a $(2 \ln n + 1)$-approximation for the unweighted splittable-demand case, and a $(\Delta)$-approximation for the weighted splittable-demand case, where $\Delta$ is the maximum degree in the graph~\cite{kao, kao2}.  As these are generalizations of the classic {\sc Dominating Set} problem, there is little room for improvement for these approximations on general graphs.  

For the soft-capacity problem in planar graphs, Kao et al. developed a constant-factor approximation algorithm~\cite{kao3}.  Additionally, they present a pseudo polynomial-time approximation scheme for the weighted splittable-demand case by first showing how to solve the problem on graphs of bounded treewidth and then applying Baker's framework~\cite{kao2}.

Substantially fewer results have been published on CDP with hard-capacities.  In this paper we consider the unweighted, splittable-demand, hard-capacity version of CDP with integral capacities and demands.  Our main result is stated in Theorem~\ref{thm:main1}.  

\begin{theorem}\label{thm:main1} For any positive integers $d^*$ and $c^*$ there exists a PTAS for the {\sc Capacitated Domination} problem, $CDP(G,d,c)$ where $G$ is a planar graph, $d$ is a demand function with maximum value $d^*$, and $c$ is a capacity function with maximum value $c^*$.\end{theorem}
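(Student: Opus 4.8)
The plan is to run Baker's layering scheme, modified to cope with the hard capacities. Root a breadth-first search at an arbitrary vertex, let $L_0,L_1,\dots$ be the BFS levels, and fix an integer parameter $k=\Theta((c^*d^*)/\ep)$. For a shift $i\in\{0,\dots,k-1\}$ call the levels $L_j$ with $j\equiv i \pmod k$ \emph{special}, and write $V_i$ for their union; deleting $V_i$ splits $G$ into ``slabs'', each induced on at most $k-1$ consecutive BFS levels. A standard ingredient of Baker's framework is that the subgraph of a planar graph induced on $O(k)$ consecutive BFS levels is $O(k)$-outerplanar and hence has a tree decomposition of width $O(k)$, which can moreover be read off directly from the embedding and the layering. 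So every slab has bounded treewidth, and the proof reduces to two pieces: (A) an algorithm that solves $CDP$ exactly on graphs of treewidth $O(k)$ in time polynomial in $n$ once $c^*$ and $d^*$ are fixed; and (B) a structural claim that for \emph{some} shift $i$, the per-slab optima --- together with a small number of extra vertices near the deleted levels --- assemble into a feasible solution of size at most $(1+\ep)|S^*|$, where $S^*$ is an optimal solution. The algorithm then runs (A) on every slab of every shift, assembles the results, and outputs the cheapest feasible one.

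For (A), I would do dynamic programming over a rooted tree decomposition of width $w=O(k)$. The DP state at a bag $X$ records, for each $v\in X$: whether $v$ lies in the partial solution; if so, its residual capacity, a value in $\{0,\dots,c^*\}$; and the amount of its demand still uncovered, a value in $\{0,\dots,d^*\}$. Since $|X|=O(w)$, there are only $(c^*+d^*+2)^{O(w)}$ states per bag, so the whole DP runs in $(c^*d^*)^{O(k)}\cdot n$ time, which is polynomial in $n$ for fixed $\ep,c^*,d^*$ (this gives a PTAS, not an EPTAS, since the exponent grows with $k$). The transitions are the usual introduce/forget/join operations; the one delicate point is that demand is splittable while capacities are hard, so when an edge $\{u,v\}$ first appears in a bag we branch over how much of $u$'s residual demand $v$ serves and how much of $v$'s residual demand $u$ serves (each a choice among at most $d^*+1$ values), decrement the two residual capacities accordingly, and account for a vertex serving part of its own demand when it is introduced. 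It is precisely the boundedness of $c^*$ and $d^*$ that keeps this state space finite, circumventing the ${\rm W}[1]$-hardness of $CDP$ parameterized by treewidth alone.

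For (B), fix $S^*$ together with a valid covering assignment $\sigma$. Two counting facts drive the choice of shift. First, every positive-demand vertex needs at least one unit served by a selected vertex while $S^*$ offers total capacity at most $c^*|S^*|$, so there are at most $c^*|S^*|$ positive-demand vertices; combined with $\sum_i|S^*\cap V_i|=|S^*|$, some shift $i$ has $|S^*\cap V_i|$ and the number of positive-demand vertices of $V_i$ both $O(|S^*|/k)$. Delete $V_i$ for this shift. Because adjacent vertices lie in consecutive BFS levels, no served-by/serving pair of $\sigma$ with both endpoints outside $V_i$ straddles a special level, so the restriction of $S^*$ and $\sigma$ to each slab is a valid solution of the induced $CDP$ instance that covers all demand of that slab except (i) the at most $c^*|S^*\cap V_i|=O(|S^*|/k)$ units that $\sigma$ served through $S^*\cap V_i$, and (ii) the demand of the special vertices, whose total is $O(d^*|S^*|/k)$ by the pigeonhole. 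Type-(i) demand sits on vertices adjacent to $V_i$ and can be re-served by adding $O(|S^*|/k)$ vertices from their closed neighborhoods inside the slabs; type-(ii) demand can be served by adding $O(d^*|S^*|/k)$ vertices from the special vertices' neighborhoods, a valid covering existing for each special vertex because its neighborhood already supplied enough capacity in $S^*$. As the forced vertices and the slab vertices lie on disjoint levels, the union is feasible, and its size is $|S^*|+O((c^*+d^*)|S^*|/k)=(1+\ep)|S^*|$. Hence each per-slab instance, asked to cover the full demand of its own vertices, admits a solution of this aggregate cost; the algorithm computes per-slab optima, and assembling them with the local additions over the best shift yields a $(1+\ep)$-approximation.

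The main obstacle is exactly the feasibility bookkeeping in (B). For ordinary {\sc Dominating Set}, Baker's slabs may share solution vertices with impunity, but here two slabs that both select a vertex would contend for its single, hard capacity; this is why I delete --- rather than overlap --- the special levels, at the price of having to re-serve the demand stranded near them, and the technical heart is arranging the local additions of (B) so that they never claim capacity already committed inside a slab solution (for instance by reserving a sliver of capacity on boundary levels, or by a fixed ownership rule for boundary vertices and their service). That this repair is affordable rests entirely on the bounds $c^*$ (limiting how much demand a deleted level can strand) and $d^*$ (limiting the per-vertex repair cost); without them, the deletion could isolate arbitrarily much demand and, as noted earlier, bidimensionality-style shortcuts do not apply. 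Finally, the same three components --- layering, the bounded-state tree-decomposition DP, and the interface repair --- transfer to {\sc Capacitated Vertex Cover}, with neighborhood demand replaced by incident edges and the capacity bound playing the same role, which is how the result extends to that problem.
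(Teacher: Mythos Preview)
Your plan has a genuine gap at the feasibility step in (B). By \emph{deleting} the special levels $V_i$ rather than keeping them in the subproblems, you may render a slab instance infeasible: take $v$ on level $i{+}1$ with $c(v)=0$ and $d(v)>0$ whose only neighbor $u$ lies on level $i$; in the optimal assignment $u$ serves $v$, but once $V_i$ is removed $v$ sits in a slab with no positive-capacity neighbor and that slab's CDP instance has no solution at all, so your per-slab DP returns nothing to assemble. More generally there is no reason the stranded type-(i) demand can be re-served ``from closed neighborhoods inside the slabs'': those neighborhoods may carry zero residual capacity, and you yourself flag this as ``the technical heart'' without actually resolving it. The assertion that ``the forced vertices and the slab vertices lie on disjoint levels'' is likewise unjustified---the repairs you describe for type-(ii) demand live on levels adjacent to $V_i$, i.e.\ inside the slabs, and can collide with capacity already committed by the per-slab optimum (which need not resemble the restriction of $S^*$).

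The paper goes the opposite way. It keeps the boundary levels \emph{inside} the slabs (zeroing their demand so that $\mathcal{A}^*$ restricted to each slab is automatically feasible) and additionally solves constant-width \emph{patches} straddling every slab interface; thus every subproblem is guaranteed feasible and is solved optimally. The union $\tilde{\mathcal{A}}$ is then covering but may overload patch vertices, and the key technical idea you are missing is a global \emph{smoothing} step (Lemma~\ref{lem:improve}): an alternating-path argument in the assignment graph shows that whenever a feasible instance carries a proper but non-covering assignment, one unit of unmet demand can be rerouted to some underfull vertex anywhere in $G$ at the cost of at most one new dominating vertex. Since only patch vertices can be overloaded and, for the best shift, their number is $O(OPT/k)$, the total smoothing cost is $O(c^*\,OPT/k)$. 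Your deletion scheme is an attempt to sidestep this smoothing, but some such global repair is precisely what makes hard capacities work here; purely local fixes at the interface do not suffice.
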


This result for hard-capacities mirrors the PTAS presented by Kao et al. for soft-capacities~\cite{kao2}.  The challenge in extending the Baker framework to the hard-capacitated version is that the technique is agnostic to vertices being duplicated in the solution and does not necessarily return a feasible solution when we disallow duplicates (or higher multiplicities).   Our main contributions are twofold.  First we show how to \emph{smooth} a capacity-exceeding solution into a capacity-respecting solution.  Second we modify the traditional Baker framework to bound the cost of this smoothing process.

In Section~\ref{sec:prelim} we introduce the problem and give an overview of Baker's framework.  Section~\ref{sec:ptas} presents the PTAS and gives a proof of our main theorem and several corollaries including an extension to {\sc Capacitated Vertex Cover}.  In Section~\ref{sec:dp} we give the dynamic programming algorithm for solving CDP exactly for graphs of bounded branchwidth.  Finally, in Section~\ref{sec:conclude} we discuss possible extensions of the techniques presented.

%______________________PRELIMS___________________________________

\section{Preliminaries}\label{sec:prelim}

Let $G = (V,E)$ be an undirected graph with vertex set $V$ and edge set $E$.  For a subset of edges, $F\subseteq E$, we define $\partial_G(F)$ to be the set of vertices in $V$ incident both to edges in $F$ and $E \setminus F$.  For a subset of vertices, $S\subseteq V$, we define $\delta_G(S)$ to be the set of edges in $E$ incident both to vertices in $S$ and $V \setminus S$ and $\partial_G(S)$ to be the set of vertices in $S$ incident to edges in $\delta_G(S)$.  When $G$ in unambiguous, we exclude it from the subscript.

We consider a demand function $d: V \rightarrow \mathbb{Z}^*$ and a capacity function $c: V \rightarrow \mathbb{Z}^*$.  The capacity (resp. demand) of a set of vertices is the sum of the constituent vertex capacities (resp. demands).  Namely, for $R \subseteq V$, $c(R) = \sum_{v \in R} c(v)$ and $d(R) = \sum_{v \in R}d(v)$.

An \emph{assignment} $\mathcal{A}$ is a multiset of ordered pairs $(u,v)$ with $u,v \in V$ such that either $(u,v) \in E$ or $u = v$, and $\forall v \in V,\ \left\vert\{(x,v)\in \mathcal{A}\ :\ x\in V\}\right\vert \leq d(v)$.  An assignment is \emph{proper} if $\forall u \in V,\ \left\vert\{(u,x) \in \mathcal{A}\ :\ x\in V\}\right\vert \leq c(u)$. An assignment is \emph{covering} if $\forall v \in V,\ \left\vert\{(x,v)\in \mathcal{A}\ :\ x\in V\}\right\vert = d(v)$.  The total unmet demand for an assignment is $t(\mathcal{A}) = \sum_{v\in V}{(d(v) - \left\vert\{(x,v)\in \mathcal{A}\ :\ x\in V\}\right\vert)}$.  Observe that if $\mathcal{A}$ is covering then $t(\mathcal{A}) = 0$. A given assignment $\mathcal{A}$ has \emph{dominating set}
$S(\mathcal{A}) = \{u\in V \ :\  (u,v) \in \mathcal{A}\}$ and \emph{size} $q(\mathcal{A}) = \left\vert S(\mathcal{A})\right\vert$.

The objective of the {\sc Capacitated Domination} problem (CDP) is to find a proper, covering assignment $\mathcal{A}$ such that $q(\mathcal{A})$ is minimized. 

Conceptually, in a solution to CDP, $\mathcal{A}$ describes an assignment of vertices to neighbors such that the demand of every vertex is met and the capacity of every vertex is not exceeded.  This generalizes the classic {\sc Dominating Set} problem in which every vertex $v$ has $d(v) = 1$ and $c(v) = \infty$.  We can think of the ordered pairs in the assignment as representing (facility, client) pairs.  

We use superscripts to denote the multiplicity of an ordered pair in an assignment.  For example $\{($u$,$v$)^4,($w$,$w$)\}$ assigns four units of demand from the vertex $v$ to the vertex $u$ and one unit of demand from $w$ to itself.

We use $\mathcal{A}_1 \uplus \mathcal{A}_2$ to denote the multiset union operation in which multiplicities are additive.  For example $\{($u$,$v$)^2,($u$,$w$)\} \uplus \{($u$,$v$)^3,($w$,$v$)\} = \{($u$,$v$)^5,($u$,$w$),($w$,$v$)\}$.

A \emph{planar} graph is one that can be embedded on a sphere without any edge-crossings.

A \emph{branch-decomposition} of $G$ is a rooted binary tree $T$ and a bijective mapping of edges of $E$ to leaves of $T$.  The nodes of $T$ map to the subsets of $E$ (called \emph{clusters}) consisting of the corresponding edges of all descendant leaves.  The root cluster therefore contains all edges in $E$, and leaf clusters are singleton edge sets. The \emph{width} of a branch-decomposition is $\max_C|\partial(C)|$ over all clusters $C$.  The \emph{branchwidth} of $G$ is the minimum width over all branch-decompositions of $G$.  For planar graphs, an optimal branch decomposition can be computed in polynomial time~\cite{gu}.
The hierarchical structure of branch-decompositions lends itself well to dynamic-programming techniques.  For many graph optimization problems, given a branch-decomposition with width $w$ for the input graph, an optimal solution can be found in $2^{O(w)}n$ time~\cite{dorn}. We present such a dynamic program in Section~\ref{sec:dp}.

The PTAS presented in Section~\ref{sec:ptas} is based on Baker's framework for designing PTASs in planar graphs ~\cite{baker}. The framework decomposes the input graph into subgraphs of bounded branchwidth, each of which can be solved efficiently, then forms an overall solution by combining the subgraph solutions.  The error incurred can be charged to the interface between these subgraphs, and a shifting argument can be used to bound this error.

Specifically, Baker's approach has been applied to {\sc Dominating Set} (and soft-{\sc Capacitated Domination}) on planar graphs as follows~\cite{marzban, kao2}.  The input graph is partitioned into \emph{levels} (vertex subsets) defined by minimum hop-distance (breadth-first search) from an arbitrary root $r$.  The resulting subgraph induced by $k$ consecutive levels has $O(k)$ branchwidth.  We decompose the graph into subgraphs of $k$ consecutive levels such that every two adjacent subgraphs overlap in two \emph{boundary} levels.  There are at most $k$ different \emph{shifts} that can divide the graph into $k$-level subgraphs in this way.  For each shift, we solve the (capacitated) {\sc Dominating Set} problem on each (bounded-branchwidth) subgraph with the slight modification that the vertices of the topmost and bottommost levels of each subgraph require zero demand (but can still be part of the dominating set).  This can be done in $2^{O(w)}n$ time.  The overall solution is then the union of the solutions for the subgraphs.  

Ignoring the demand of the outermost levels prevents a doubling-up of the demand requirements of the vertices in the boundary levels (adjacent subgraph overlaps) while still allowing these vertices to satisfy demand from inner levels in the subgraph. Taking the union ensures that any ignored demand is met by some subgraph (no vertex's demand is ignored by multiple subgraphs).  Since each subgraph is solved optimally and an optimum solution, $X$, for the entire graph induces solutions on each subgraph, the sum of the costs of the subgraph solutions is bounded by the cost of $X$ plus some amount of error that can be bounded by the weight of $X$ that intersects the boundary levels.  For some \emph{shift} this weight is a small fraction ($O(1/k)$) of the total weight of the optimum solution.

The problem with extending this approach to hard-capacities emerges when a vertex in a boundary level is included in the solutions for two different subgraphs. For the classic {\sc Dominating Set} problem, these vertices have unbounded capacities so taking the union of solutions is not an issue.  Similarly, for the soft-{\sc Capacitated Domination} problem, we are allowed to include multiple copies of these vertices, and since the weight is small we can afford to do so.  However, for the hard-{\sc Capacitated Domination} problem these vertices pose a problem since we may have \emph{overloaded} them (exceeded their capacities) and are not allowed to simply pay the extra cost to use these vertices multiple times.

In this paper, we address this issue by redirecting the assignment of overloaded vertices to underloaded vertices elsewhere in the graph.  We may have to do this for every solution vertex that appears on the boundary levels.  Unfortunately the above bound on the error arises from the sum of subgraph solution costs and does not bound the weight of the solution vertices that appear on boundary levels.  We address this problem by modifying the way that the graph is decomposed into subgraphs: instead of every subgraph being constructed and used identically (as is typical in the Baker framework), we alternate between two different types of subgraphs.  This facilitates accounting for the reassignment of overloaded vertices.

%______________________PTAS___________________________________

\section{PTAS}\label{sec:ptas}

The PTAS for {\sc Capacitated Domination} uses the approach introduced by Baker~\cite{baker} and described in Section~\ref{sec:prelim}. Breadth-first search from an arbitrary root vertex, $r$, partitions the graph into $m$ levels (defined by hop-distance from $r$) such that the subgraph induced by any $k$ consecutive levels has branchwidth at most $2k$.  

For a given $i,j,k,l$ such that $0\leq i < k$, $0\leq j \leq \lfloor\frac{m}{k}\rfloor$, and $0\leq l \leq \lfloor\frac{m}{k}\rfloor -1$, let $G^i_j$ denote the subgraph induced by levels $jk+i$ through $(j+1)k+i-1$ and $H^i_l$ denote the subgraph induced by levels $(l+1)k+i-2$ through $(l+1)k+i + 1$.  For a fixed $k$ and $i$ we call $G^i_j$ the $j\text{th}$ \emph{slab} and $H^i_l$ the $l\text{th}$ \emph{patch}.  We call $k$ the slab \emph{height} and $i$ the slab \emph{shift}.

Consecutive slabs do not overlap each other and each slab has branchwidth at most $2k$.  Each patch occurs at the interface between two consecutive slabs, overlapping with two levels from each, and has constant branchwidth (at most 8).

The algorithm proceeds as follows (see pseudocode).  Independently solve (a slightly modified version of) CDP restricted to each slab and each patch.  Then take the union of solutions over all slabs and patches.  Note that since patches overlap the slabs, if a dominating  vertex from a proper patch solution is also a dominating vertex from a proper slab solution, the union of solutions may no longer be proper.  In fact the capacities for patch vertices have been effectively doubled.  The algorithm alleviates this excess strain on patch vertices by reassigning the demand to \emph{underfull} vertices elsewhere in the graph.

Specifically, for each slab $G^i_j$ and each patch $H^i_l$ modify the problem so that the vertices of the topmost and bottommost levels have no demand (exclude the top level of the first slab and the bottom level of the last slab).  Since patches overlap the two bottommost and two topmost levels of slabs, any demand removed from a slab will be covered by a patch and vice versa. The branchwidth of the slab is at most $2k$, so we can use the dynamic programming algorithm $\Call{Cdp\_dp}{}$ described in Section~\ref{sec:dp} to generate a solution to the modified problem on the slab in polynomial time.  

Let $\mathcal{A}_{G^i_j}$ denote the optimal assignment for the modified CDP for slab $G^i_j$ and $\mathcal{A}_{H^i_l}$ denote the optimal assignment for patch $H^i_l$.  Combining slab and patch assignments gives $\tilde{\mathcal{A}_i} = \Big(\biguplus_{j} \mathcal{A}_{G^i_j}\Big) \uplus \Big(\biguplus_l \mathcal{A}_{H^i_l}\Big)$.  Since $\tilde{\mathcal{A}_i}$ may no longer be a proper assignment, we use a \emph{smoothing} procedure, $\Call{Smooth}{}$, to generate a final solution $\mathcal{A}'_i$. $\Call{Smooth}{}$ first removes ordered pairs from an assignment $\mathcal{A}$ until it is proper by repeatedly finding an \emph{overloaded} vertex $v$ and removing a pair $(v,u)$ from $\mathcal{A}$ until no overloaded vertex can be found. $\Call{Smooth}{}$ then reassigns any remaining unmet demand to \emph{underfull} vertices in the graph with excess capacity by repeatedly finding alternating paths in the induced assignment graph (described in Lemma~\ref{lem:improve}).  We use the following lemmas to show that such a reassignment is feasible.  Finally, the algorithm returns the assignment $\mathcal{A}'_i$ with the minimum cost among all choices of shift $i$. 

\begin{algorithm}
   \caption{Capacitated Domination Problem PTAS}
    \begin{algorithmic}[1]
      \Function{CDP\_PTAS}{$G_0, c, d, k$}
      \INPUT{Graph $G_0$, capacity function $c$, demand function $d$, positive integer $k$}
      \OUTPUT{Minimum proper, covering assignment $\mathcal{A}$}
      \For{$i$ in 0...$k-1$}
			\State{$\tilde{\mathcal{A}_i} \gets \emptyset$}
   	   \For{$j$ in 0...$\lfloor m/k\rfloor$}
	   		\State{$\tilde{\mathcal{A}_i} \gets \tilde{\mathcal{A}_i} \uplus \Call{CDP\_DP}{G^i_j, c, d}$}
			\EndFor
		   \For{$l$ in 0...$\lfloor m/k\rfloor$-1}
		   	\State{$\tilde{\mathcal{A}_i} \gets \tilde{\mathcal{A}_i} \uplus \Call{CDP\_DP}{H^i_l, c, d}$}
			\EndFor
			\State{$\mathcal{A}'_i \gets \Call{Smooth}{G_0, \tilde{\mathcal{A}_i}, c, d}$}
		\EndFor
      \State\Return $\min_i\mathcal{A}'_i$
\EndFunction
\end{algorithmic}
\end{algorithm}

%--------------cut lemma---------------------------

\begin{lemma}\label{lem:cut} Given a graph $G$ with demand and capacity functions $d$ and $c$ and a set of vertices $R \subseteq V$, if a proper, covering assignment exists for the {\sc Capacitated Domination} problem then $c(R) + d(\partial(R)) \geq d(R) $.
\end{lemma}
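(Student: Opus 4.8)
The plan is a direct double-counting argument on a fixed proper, covering assignment $\mathcal{A}$, which exists by hypothesis. Since $\mathcal{A}$ is covering, every vertex $v\in R$ is the second coordinate of exactly $d(v)$ pairs of $\mathcal{A}$, so $d(R)=\sum_{v\in R}d(v)=\bigl|\{(x,v)\in\mathcal{A} : v\in R\}\bigr|$. I would then partition these pairs according to whether the serving vertex $x$ lies in $R$ or in $V\setminus R$, and bound the two parts by $c(R)$ and $d(\partial(R))$ respectively.

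For the pairs $(x,v)\in\mathcal{A}$ with $v\in R$ and $x\in R$, charge each such pair to its first coordinate $x$. Since $\mathcal{A}$ is proper, each vertex $u$ is the first coordinate of at most $c(u)$ pairs of $\mathcal{A}$, so summing over $u\in R$ shows the number of such pairs is at most $\sum_{u\in R}c(u)=c(R)$.

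For the pairs $(x,v)\in\mathcal{A}$ with $v\in R$ and $x\notin R$, note that $x\neq v$, so by the definition of an assignment $xv\in E$; this edge joins $R$ with $V\setminus R$, hence $xv\in\delta(R)$ and therefore $v\in\partial(R)$. Thus every such pair has its second coordinate in $\partial(R)$, and since (by the assignment constraint) each vertex $v$ is the second coordinate of at most $d(v)$ pairs, the number of such pairs is at most $\sum_{v\in\partial(R)}d(v)=d(\partial(R))$. Adding the two bounds yields $d(R)\le c(R)+d(\partial(R))$, as claimed.

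I do not expect a genuine obstacle here beyond careful bookkeeping; the only step that needs attention is the observation that a pair whose two coordinates straddle the boundary of $R$ cannot be a self-loop $(v,v)$ and hence must be an actual edge of $G$. This is precisely what certifies that its served endpoint belongs to $\partial(R)$, and it is what allows the demand of $R$ served from outside $R$ to be absorbed into $d(\partial(R))$ rather than a larger quantity.
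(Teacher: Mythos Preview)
Your proof is correct and follows essentially the same approach as the paper: both fix a proper, covering assignment and split the demand of $R$ into the part served from inside $R$ (bounded by $c(R)$ via properness) and the part served from outside $R$ (which must land in $\partial(R)$ and is therefore bounded by $d(\partial(R))$). Your write-up is a bit more explicit about the multiset bookkeeping, but the argument is identical.
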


\begin{proof}
Assume that a proper, covering assignment $\mathcal{A}$ exists.  For any subset of vertices $R \subseteq V$ the total demand, $d(R)$, of vertices in $R$ must be met by vertices in the dominating set $S(\mathcal{A})$.  The vertices in $R\cap S(\mathcal{A})$ can contribute at most $c(R\cap S(\mathcal{A})) \leq c(R)$ toward meeting demand from $R$.  Vertices in $V \setminus R$ can only cover demand from $R$ via adjacent vertices, namely those in $\partial(R)$, so vertices in $(V \setminus R)\cap S(\mathcal{A})$ can meet at most $d(\partial(R))$ of the total demand from $R$. But $S(\mathcal{A}) = (R\cap S(\mathcal{A})) \cup ((V \setminus R)\cap S(\mathcal{A}))$ can meet the entire demand from $R$.  This gives $c(R) + d(\partial(R)) \geq c(R\cap S(\mathcal{A})) + d(\partial(R)) \geq d(R)$.
\end{proof}

%--------------improve lemma---------------------------

\begin{lemma}\label{lem:improve} {Given a graph $G = (V,E)$ with demand and capacity functions $d$ and $c$ and a proper assignment $\mathcal{A}'$ with unmet demand $t(\mathcal{A}') > 0$, if a proper, covering assignment exists then a new proper assignment $\mathcal{A}''$ can be found in polynomial time such that $q(\mathcal{A}'') \leq q(\mathcal{A}') + 1$ and $t(\mathcal{A}'') < t(\mathcal{A}')$}.
\end{lemma}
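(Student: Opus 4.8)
The plan is to set up a flow/augmenting-path argument on a suitable bipartite-style "assignment graph" and show that as long as there is unmet demand, we can either route an extra unit of demand along an existing dominating vertex (keeping $q$ fixed) or recruit one new dominating vertex (increasing $q$ by exactly one) while strictly decreasing $t$. First I would build the directed \emph{assignment graph} $D_{\mathcal{A}'}$ on vertex set $V$ (plus possibly a source and sink): for every pair $(u,v) \in \mathcal{A}'$ put a "reverse" arc $v \to u$ (meaning this unit of demand could be reclaimed from $u$ and sent elsewhere), and for every edge $uv \in E$ or loop with $u$ having spare capacity (i.e.\ $u$ currently covers fewer than $c(u)$ units) put a "forward" arc into $u$ from each neighbor whose demand could be served there. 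An \emph{alternating path} starts at a vertex $w$ with unmet demand ($|\{(x,w)\in\mathcal{A}'\}| < d(w)$), alternately traverses forward arcs (assign a unit) and reverse arcs (unassign a unit), and ends at an \emph{underfull} vertex $u$ with $|\{(u,x)\in\mathcal{A}'\}| < c(u)$. Pushing one unit of demand along such a path yields a new assignment $\mathcal{A}''$: it remains proper because every vertex on the path that gains an assigned-out unit simultaneously loses one except the final underfull $u$, which had spare capacity; it remains an assignment because the demand counts are unchanged except at $w$, which gains one (and $w$ had a deficit); and $t$ drops by exactly one. The dominating set grows by at most one vertex, namely if the terminal underfull vertex $u$ was not already in $S(\mathcal{A}')$; all intermediate vertices were already dominating since they had outgoing pairs in $\mathcal{A}'$.

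The remaining — and main — obstacle is to prove that such an alternating path always exists when $t(\mathcal{A}') > 0$ and a proper covering assignment exists at all. I would argue by contradiction: suppose no alternating path reaches any underfull vertex. Let $R$ be the set of vertices reachable from the deficient vertex $w$ in $D_{\mathcal{A}'}$ via alternating paths (equivalently, the "saturated side" of the implied min-cut). By choice of $R$, no vertex in $R$ is underfull, so every $u \in R$ has $|\{(u,x)\in\mathcal{A}'\}| = c(u)$; moreover all of those assigned-out units must land inside $R$ (an assigned-out pair $(u,v)$ with $v \notin R$ would, via its reverse arc, put $v$ in $R$), so $\mathcal{A}'$ assigns exactly $c(R)$ units of demand \emph{into} $R$ in total. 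Also, for any edge leaving $R$ — say $u \in \partial(R)$ adjacent to some $v \notin R$ — $u$ must be at full capacity (otherwise the forward arc $v \to u$ would not have pulled $v$ into $R$, but it would have pulled nothing; more carefully, the reachability of $u$ and its spare capacity would let the path terminate), so the only demand from $R$ that can be served from outside $R$ is demand of boundary vertices themselves, bounded by $d(\partial(R))$. Therefore the total demand currently met within $R$ is at most $c(R) + d(\partial(R))$. But $w \in R$ has strictly unmet demand, and every other vertex of $R$ has its full demand counted, so the demand met within $R$ is strictly less than $d(R)$, giving $c(R) + d(\partial(R)) < d(R)$ — contradicting Lemma~\ref{lem:cut}, which applies because a proper covering assignment is assumed to exist. (The delicate part is getting the boundary bookkeeping exactly right: making sure "demand served into $R$ from outside" is genuinely capped by $d(\partial(R))$ and not, say, by $c(\partial(R))$, which the definition of the forward arcs and the saturation of boundary vertices should guarantee.)

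Finally, for the polynomial-time claim, note that $D_{\mathcal{A}'}$ has $O(|V|)$ vertices and $O(|E| + |\mathcal{A}'|)$ arcs, and $|\mathcal{A}'| \le d(V)$ is polynomial in the input; a single breadth-first search from the set of deficient vertices finds an alternating path (or certifies none exists) in linear time, and pushing one unit along it is also linear. Thus $\mathcal{A}''$ is computed in polynomial time, $q(\mathcal{A}'') \le q(\mathcal{A}') + 1$, and $t(\mathcal{A}'') = t(\mathcal{A}') - 1 < t(\mathcal{A}')$, as required.
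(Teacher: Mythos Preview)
Your approach is the same as the paper's: build an alternating-path structure on the assignment, argue by contradiction via Lemma~\ref{lem:cut} that an augmenting path from a deficient vertex to an underfull vertex must exist, and flip along it to drop $t$ by one while adding at most one new dominator. The paper calls this a \emph{semi-alternating path} in $G$ augmented with self-loops and makes two normalizations up front (every $v\in U$ is at full capacity and spends all of it on itself; the assignment digraph is acyclic), which you omit but which are exactly what make the boundary bookkeeping you flag as ``delicate'' go through cleanly.

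There are, however, two genuine glitches in your formalization of $D_{\mathcal{A}'}$. First, your forward arcs are defined only into vertices with spare capacity; with that definition an alternating path from $w$ takes one forward step and immediately terminates, so you can never thread through a chain of saturated servers. The intermediate ``server'' vertices on the path need not be underfull---they only need an outgoing assignment pair to redirect---so forward arcs should go to \emph{any} neighbor, with ``underfull'' being the termination condition, not a precondition on every forward step. Second, in your contradiction you claim that if $u\in R$ and $(u,v)\in\mathcal{A}'$ with $v\notin R$, then the reverse arc puts $v$ in $R$; but by your own definition the reverse arc for $(u,v)$ is $v\to u$, which points \emph{into} $R$ and cannot extend reachability to $v$. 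The paper avoids this by orienting the alternation so that from a vertex reached via a non-assignment edge you next follow an assignment arc $u\to v$ (server to client) outward, which is what actually forces $v$ into the reachable set. Fix these two orientation issues and your argument coincides with the paper's.
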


\begin{proof} Let $U$ be the set of vertices with unmet demand.  We assume that these vertices are at full capacity, otherwise simply assign such an underfull vertex, $v$, to itself and satisfy the lemma by letting $\mathcal{A}'' = \mathcal{A}' \uplus \{(v,v)\}$.  We further assume that a vertex $v \in U$ spends its entire capacity toward meeting its own demand, so $(v,v)$ occurs $c(v)$ times in $\mathcal{A}'$. We can always reassign $v$'s capacity in this way without increasing overall unmet demand or the size of $\mathcal{A}'$.

Use assignment $\mathcal{A}'$ to define an arc set on $V$ such that $(u,v) \in \mathcal{A}'$ indicates an arc from $u$ to $v$, and for every vertex $u$, the outdegree $\Delta^-(u)$ is at most $c(u)$.  The dominating set $S(\mathcal{A}')$ is $\{v\ :\ \Delta^-(v) > 0\}$, and a covering assignment is one in which, for every vertex $u$, the indegree $\Delta^+(u)$ equals $d(u)$.

Consider the directed \emph{assignment graph} $\mathscr{A}$ defined by this arc set.  We can assume that $\mathscr{A}$ has no directed cycles (except for self-loops), otherwise every arc in such a cycle can be changed to a self-loop without changing the total unmet demand or the size of $\mathcal{A}'$.

If a proper, covering assignment exists then there must be a set $W$ of vertices that are not at full capacity.

Let $G' = (V',E')$ denote $G = (V,E)$ augmented with self-loops at every vertex. Formally, $V'=V$ and $E' = E \cup \{vv\ :\ v\in V\}$.  We define a \emph{semi-alternating} directed path $P = p_0,p_1,...,p_l$ to be a path in $G'$ such that $\forall i$ $p_ip_{i+1} \in E'$ and $(p_i,p_{i+1}) \notin \mathcal{A}' \Leftrightarrow (p_{i+1},p_{i+2}) \in \mathcal{A}'$.  That is, $P$ is a path through $G'$ in which exactly every other edge is a \emph{forward} arc in the assignment graph $\mathscr{A}$.  We show that there exists such a \emph{semi-alternating} directed path $P = p_0,p_1,...,p_l$ from a vertex $p_0 \in U$ to a vertex $p_l \in W$.

Assume to the contrary that no such path exists.  Let $U^+$ be the set of all vertices $v \in V$ such that there exists a semi-alternating directed path from $u$ to $v$ for some $u \in U$.  Clearly $U \subseteq U^+$, and by assumption $U^+ \cap W = \emptyset$, so all vertices of $U^+$ are at full capacity.  Observe that $U \cap \partial_{G'}(U^+) = \emptyset$ and there are no self-loops in $\mathscr{A}$ at vertices in $\partial_{G'}(U^+)$ otherwise $U^+$ could be extended to include neighbors of such vertices.  For the same reason, any arcs of $\mathscr{A}$ into vertices in $\partial_{G'}(U^+)$ must start in $V'\setminus U^+$.  Since all of the demand of every vertex in $\partial_{G'}(U^+)$ is met and all arcs of $\mathscr{A}$ into $\partial_{G'}(U^+)$ start in $V'\setminus U^+$, we infer that $V'\setminus U^+$ contributes a total of $d(\partial_{G'}(U^+))$ toward meeting the demand of $U^+$. Furthermore all vertices in $U^+$ are at full capacity and there are no outgoing arcs of $\mathscr{A}$ in $\delta_{G'}(U^+)$ so $U^+$ contributes a total of $c(U^+)$ toward meeting its own total demand.  Yet $U \subseteq U^+$ so $U^+$ contains unmet demand.  So $c(U^+) + d(\partial_{G'}(U^+)) < d(U^+) $ which by Lemma~\ref{lem:cut} contradicts our feasibility assumption.

Such a path can be found in linear time using breadth-first search from vertices with insufficient demand met.

Given a \emph{semi-alternating} directed path $P = p_0,p_1,...,p_l$ from a vertex $p_0 \in U$ to a vertex $p_l \in W$, we know that $(p_0,p_1)$ is not in $\mathcal{A}'$ since $p_0$ is in $U$ and does not contribute to meeting the demand of its neighbors. We can also assume that $(p_{l-1},p_l)$ is not in $\mathcal{A}'$ since if the path ends in an assignment arc we can append a self-loop.  Reassign the arcs along this path as follows: 
$$\mathcal{A}'' = (\mathcal{A}' \setminus \{(p_1,p_2),(p_3,p_4),...,(p_{l-2},p_{l-1})\}) \uplus \{(p_1,p_0),(p_3,p_2),...,(p_{l-2},p_{l-3}),(p_l,p_{l-1})\}$$  
This gives $q(\mathcal{A}'') \leq q(\mathcal{A}') + 1$.  All of the demand previously met along path $P$ is still met although potentially by a different source, but $\mathcal{A}''$ also meets one additional unit of previously unmet demand at vertex $p_0$.  Therefore the resulting proper assignment $\mathcal{A}''$ satisfies $t(A'') < t(A')$.
\end{proof}

We restate the main theorem here for convenience.

%--------------main theorem---------------------------

\addtocounter{theorem}{-1}
\begin{theorem}\label{thm:main2} For any positive integers $d^*$ and $c^*$ there exists a PTAS for the {\sc Capacitated Domination} problem $CDP(G,d,c)$ where $G$ is a planar graph, $d$ is a demand function with maximum value $d^*$, and $c$ is a capacity function with maximum value $c^*$.\end{theorem}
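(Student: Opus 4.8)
The plan is to run $\Call{CDP\_PTAS}{}$ with slab height $k := \max\{5,\lceil 4(1+c^*)/\epsilon\rceil\}$ and to establish three facts: every assignment $\mathcal{A}'_i$ it produces is proper and covering; its size $q(\mathcal{A}'_i)$ is bounded by the total size of the slab solutions plus $(1+c^*)$ times the total size of the patch solutions; and over the $k$ shifts the latter can be driven to a $\Theta(1/k)$ fraction of the optimum. Throughout, fix an optimal assignment $X$ with $q(X)=\mathrm{OPT}$ (if the instance is infeasible there is nothing to do). The recurring structural fact I will use is that a positive-demand (\emph{interior}) vertex of any slab or patch has all of its neighbors inside that subgraph, because a vertex's closed neighborhood meets only three consecutive levels whereas the interior levels are two levels away from the outermost levels of the subgraph; with the standard treatment of the fewer-than-$k$ extremal levels, every vertex is interior to exactly one slab or patch.

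\emph{Feasibility.} Restricting $X$ to the pairs that cover interior vertices of a fixed slab $G^i_j$ (resp.\ patch $H^i_l$) gives a proper, covering assignment for the corresponding modified instance: properness is inherited; every interior vertex still has all its demand covered, since the vertices covering it in $X$ are its neighbors, which lie in the subgraph; and the boundary vertices have zero demand. Hence the modified instances are feasible, $\Call{CDP\_DP}{}$ returns optimal $\mathcal{A}_{G^i_j}$ and $\mathcal{A}_{H^i_l}$, and (by the same decomposition, now applied to the returned assignments) $\tilde{\mathcal{A}_i}$ is covering, because each vertex is supplied its demand exactly once, by the unique subgraph to which it is interior. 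The only obstruction to properness is that a vertex may appear in one slab solution (slabs partition the levels) and at most one patch solution (for $k\ge 4$ the four-level patches of a fixed shift are pairwise disjoint), so its out-degree in $\tilde{\mathcal{A}_i}$ is at most $2c(v)$. The first pass of $\Call{Smooth}{}$ therefore deletes at most $c(v)\le c^*$ pairs at each overloaded vertex, creating that many units of unmet demand; the second pass repeatedly applies Lemma~\ref{lem:improve} (applicable since the instance is feasible), each call lowering the unmet demand and raising the size by at most one, terminating with a covering assignment. It remains proper throughout and runs in polynomial time, so $\mathcal{A}'_i$ is proper and covering.

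\emph{Size.} Deletions never enlarge a dominating set, so $q(\mathcal{A}'_i)\le q(\tilde{\mathcal{A}_i})+(\#\text{deleted pairs})$; an overloaded vertex lies in a patch solution, so there are at most $\sum_l q(\mathcal{A}_{H^i_l})$ of them and at most $c^*\sum_l q(\mathcal{A}_{H^i_l})$ deleted pairs. Since $S(\tilde{\mathcal{A}_i})$ is the union of the subgraph dominating sets, $q(\tilde{\mathcal{A}_i})\le \sum_j q(\mathcal{A}_{G^i_j})+\sum_l q(\mathcal{A}_{H^i_l})$, giving
\[ q(\mathcal{A}'_i)\ \le\ \sum_j q(\mathcal{A}_{G^i_j})\ +\ (1+c^*)\sum_l q(\mathcal{A}_{H^i_l}). \]
Now $q(\mathcal{A}_{G^i_j})$ is at most the number of vertices of $S(X)$ assigned in $X$ to cover some interior vertex of $G^i_j$; by the level-counting fact above (closed neighborhoods span three levels, consecutive slab interiors are two empty levels apart) each vertex of $S(X)$ is charged this way to at most one slab, so $\sum_j q(\mathcal{A}_{G^i_j})\le \mathrm{OPT}$ for all $i$. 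Likewise $q(\mathcal{A}_{H^i_l})$ is at most the number of $S(X)$-vertices covering an interior vertex of $H^i_l$; for a vertex at level $\ell$ this requires $(l+1)k+i\in\{\ell-1,\ell,\ell+1,\ell+2\}$, which for fixed $\ell$ holds for at most four pairs $(i,l)$ over all shifts. Summing, $\sum_{i=0}^{k-1}\sum_l q(\mathcal{A}_{H^i_l})\le 4\,\mathrm{OPT}$, so $\min_i\sum_l q(\mathcal{A}_{H^i_l})\le 4\,\mathrm{OPT}/k$.

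\emph{Conclusion and obstacle.} For the minimizing shift $i^*$, the two size bounds and the choice of $k$ give $q(\mathcal{A}'_{i^*})\le \mathrm{OPT}+\tfrac{4(1+c^*)}{k}\mathrm{OPT}\le (1+\epsilon)\mathrm{OPT}$, and the algorithm returns $\min_i\mathcal{A}'_i$, hence a $(1+\epsilon)$-approximation. The running time is polynomial: $k$ shifts, each calling $\Call{CDP\_DP}{}$ on $O(m/k)$ slabs of branchwidth at most $2k$ and $O(m/k)$ patches of branchwidth at most $8$ — polynomial in $n$ for constant $k$, $c^*$, $d^*$ by Section~\ref{sec:dp} — followed by a polynomial run of $\Call{Smooth}{}$. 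I expect the genuine difficulty to be bounding the smoothing overhead: the standard Baker analysis charges interface error to the total size of the subgraph solutions, which does not control how many vertices $\Call{Smooth}{}$ must repair. The fix is precisely the two-tier slab/patch scheme used here — the overloaded vertices all sit in the thin patches, which have constant branchwidth and which each optimal vertex meets for only $O(1)$ shifts, so their total size drops to $O(\mathrm{OPT}/k)$ under shifting and the $c^*$ multiplicative loss from smoothing still leaves a $(1+\epsilon)$ factor.
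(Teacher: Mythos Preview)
Your proposal is correct and follows essentially the same approach as the paper: the two-tier slab/patch decomposition, optimal solution of each piece by the bounded-branchwidth DP, smoothing via Lemma~\ref{lem:improve}, and the shifting argument that each optimal vertex meets patches for at most four shifts. Your accounting is in fact slightly more careful than the paper's --- you track the $(1+c^*)$ factor (patch contribution to $q(\tilde{\mathcal{A}_i})$ plus $c^*$ reassignments per overloaded patch vertex) where the paper writes only $c^*$, and your charging of $\sum_j q(\mathcal{A}_{G^i_j})$ to $\mathrm{OPT}$ via ``vertices of $S(X)$ covering an interior vertex'' is a clean variant of the paper's use of $\mathcal{A}^*\cap G^i_j$; both give the same $\le \mathrm{OPT}$ bound since consecutive slab interiors are separated by two boundary levels.
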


\begin{proof}
Let $\mathcal{A}^*$ denote an optimal assignment for CDP and let $OPT = q(\mathcal{A}^*)$.  For a fixed shift $i$ let $\mathcal{A}^*_{G^i_j} = \mathcal{A}^* \cap G^i_j$ be the intersection of the optimal assignment with the $j^\text{th}$ slab and $OPT_{G^i_j}$ denote the size of the optimal assignment for the modified problem on the $j^\text{th}$ slab.  Similarly, let $\mathcal{A}^*_{H^i_l} = \mathcal{A}^* \cap H^i_l$ be the intersection of the optimal assignment with the $l^\text{th}$ patch and $OPT_{H^i_l}$ denote the size of the optimal assignment for the modified problem on the $l^\text{th}$ patch.

Because we modify the problem to ignore demand from the outermost levels of a slab (resp. patch) $\mathcal{A}^*_{G^i_j}$ (resp. $\mathcal{A}^*_{H^i_l}$) is a covering assignment for the modified problem on $G^i_j$ (resp. $H^i_l$) since all of the demand on the non-outermost levels will be covered by $\mathcal{A}^*_{G^i_j}$ (resp. $\mathcal{A}^*_{H^i_l}$).  Therefore $OPT_{G^i_j} \leq q(A^*_{G^i_j})$ and $OPT_{H^i_l} \leq q(\mathcal{A}^*_{H^i_l})$.

Our algorithm first determines the union, $\tilde{\mathcal{A}}$, of assignments on slabs and patches.  The size of this union is at most the sum of the sizes of the assignments for each slab and each patch.  We then remove and reassign at most $c^*$ assignment pairs for each duplicated vertex in the union (where $c^*$ is the value of the maximum capacity).  Duplicated vertices can only occur within the levels of the patches.  By Lemma~\ref{lem:improve} each reassignment adds at most one vertex to the dominating set.  Let $\mathcal{A}'$ be the final assignment output by the algorithm. Therefore
 $$q(\mathcal{A}') \leq \sum_j{OPT_{G^i_j}} + c^* \sum_l{OPT_{H^i_l}} \leq \sum_j{q(\mathcal{A}^*_{G^i_j})} + c^* \sum_l{q(\mathcal{A}^*_{H^i_l})}\leq OPT + c^* \sum_l{q(\mathcal{A}^*_{H^i_l})}$$
The final inequality comes from the slabs (and therefore their intersections with $\mathcal{A}^*$) being disjoint.  Let $i^*$ be the shift that minimizes the sum, $\sum_l{q(\mathcal{A}^*_{H^i_l})}$, of the intersection of patches with the optimal dominating set $S(\mathcal{A}^*)$.  

If $\sum_l{q(\mathcal{A}^*_{H^{i^*}_l})}$ exceeds $(4/k)OPT$ then the sum of patch intersections over all shifts, $\sum_i\sum_l{q(\mathcal{A}^*_{H^i_l})}$ exceeds $4OPT$ since $i^*$ was chosen from $k$ possible shifts to give the minimum such sum.  But each vertex in $\mathcal{A}^*$ appears in a patch for at most four different shifts so can contribute at most 4 to the sum, giving $\sum_i\sum_l{q(\mathcal{A}^*_{H^i_l})} \leq 4OPT$.  Therefore, $\sum_l{q(\mathcal{A}^*_{H^{i^*}_l})} \leq (4/k)OPT$.  This gives
 $$q(\mathcal{A}') \leq OPT + (4c^*/k)OPT$$
Setting $k$ to $4c^*/\epsilon$ gives a $(1+\epsilon)$-approximation.

The slabs can be solved in $O(2k{c^*}^{6k}{d^*}^{4k}n)$ time and patches can be solved in $O(8{c^*}^{24}{d^*}^{16}n)$ time using the dynamic program algorithm described in Section~\ref{sec:dp}. The algorithm then performs at most $\epsilon n$ reassignments, each of which can be computed in linear time.  This process is repeated $k$ times, giving an overall $O(k^2(c^*d^*)^{6k}n + c^*n^2)$ runtime.  The algorithm is therefore an EPTAS when we take $k$ to be constant.
\end{proof}

In fact, the capacity of a vertex never need exceed the sum of the demands in its closed neighborhood. If it does, we can remove the excess capacity without affecting the solution.  Conversely, the demand of a vertex can never exceed the sum of the capacities in its closed neighborhood. If it does, there is no feasible solution.  Therefore, we can assume $c^* \leq d^*(\Delta^* + 1)$ and $d^* \leq c^*(\Delta^* + 1)$, where $\Delta^*$ is the largest degree.  Our PTAS can thus be expressed in terms of any two of the three parameters $c^*, d^*, \Delta^*$, giving the following two corollaries.

\begin{cor} For any positive integers $d^*$ and $\Delta^*$ there exists a PTAS for the {\sc Capacitated Domination} problem, $CDP(G,d,c)$ where $G$ is a planar graph with maximum degree $\Delta^*$, $d$ is a demand function with maximum value $d^*$, and $c$ is a capacity function.
\end{cor}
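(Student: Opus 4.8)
The plan is to reduce the statement to Theorem~\ref{thm:main2} via a capacity-capping preprocessing step, so that the bound on the degree together with the bound on demand yields an effective bound on capacity. First I would note that in any \emph{assignment} $\mathcal{A}$ (proper or not), a vertex $v$ can appear as the first coordinate only in pairs $(v,u)$ with $u$ in the closed neighborhood $N[v] = \{v\} \cup \{u : uv \in E\}$, and for each such $u$ at most $d(u)$ pairs of $\mathcal{A}$ have $u$ as their second coordinate. Hence the number of pairs of $\mathcal{A}$ with first coordinate $v$ is at most $d(N[v]) := \sum_{u \in N[v]} d(u) \leq d^*(\Delta^* + 1)$, independently of $c(v)$.

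Next I would define the capped capacity function $c'(v) = \min\{c(v), d(N[v])\}$ and argue that $CDP(G,d,c)$ and $CDP(G,d,c')$ admit exactly the same proper, covering assignments, hence have the same optimum. One direction is trivial since $c' \leq c$ pointwise, so any assignment proper for $c'$ is proper for $c$; for the other direction, if $\mathcal{A}$ is proper for $c$ then by the observation above the number of pairs of $\mathcal{A}$ with first coordinate $v$ is at most $\min\{c(v), d(N[v])\} = c'(v)$, so $\mathcal{A}$ is proper for $c'$ as well. Since the maximum value taken by $c'$ is at most $d^*(\Delta^* + 1)$, which depends only on $d^*$ and $\Delta^*$, I would then invoke Theorem~\ref{thm:main2} on $CDP(G,d,c')$ with capacity bound $c^* = d^*(\Delta^* + 1)$. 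For every $\epsilon > 0$ this gives, in polynomial time, a $(1+\epsilon)$-approximate proper, covering assignment for $CDP(G,d,c')$, which by the equivalence just established is also a $(1+\epsilon)$-approximation for $CDP(G,d,c)$; the capping step adds only linear overhead, so the overall algorithm remains polynomial.

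There is essentially no combinatorial obstacle here: all of the difficulty resides in Theorem~\ref{thm:main2}, which is assumed. The only point needing care is the verification that capping capacities at $d(N[v])$ changes neither feasibility nor the optimum value — and this follows immediately from the fact that a vertex simply cannot be assigned more demand than is present in its closed neighborhood, so any capacity it has above that threshold is never used.
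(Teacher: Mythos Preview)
Your proposal is correct and is exactly the approach the paper takes: it observes that a vertex's capacity never needs to exceed the total demand in its closed neighborhood, caps each capacity at $d(N[v]) \leq d^*(\Delta^*+1)$, and then invokes Theorem~\ref{thm:main2} with this derived bound on $c^*$. Your write-up is in fact more careful than the paper's one-line justification, explicitly verifying that the capped instance has the same set of proper, covering assignments.
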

\begin{cor} For any positive integers $c^*$ and $\Delta^*$ there exists a PTAS for the {\sc Capacitated Domination} problem, $CDP(G,d,c)$ where $G$ is a planar graph with maximum degree $\Delta^*$, $d$ is a demand function, and $c$ is a capacity function with maximum value $c^*$.
\end{cor}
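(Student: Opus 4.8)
The plan is to obtain this as an immediate consequence of Theorem~\ref{thm:main1}: fixing the maximum degree $\Delta^*$ together with the maximum capacity $c^*$ already forces an upper bound on the maximum demand of any \emph{feasible} instance, so it suffices to feed that bound into the PTAS of Theorem~\ref{thm:main1}.

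The one preliminary step is to deal with infeasibility. For each vertex $v$, let $N[v]$ denote its closed neighborhood; if $d(v) > c(N[v])$ for some $v$, then even devoting the full capacity of $v$ and all of its neighbors to $v$ cannot meet $d(v)$, since only vertices in $N[v]$ may cover $v$'s demand and each contributes at most its capacity. In that case no proper, covering assignment exists and the algorithm reports this; the check runs in linear time. (This is exactly the observation recorded in the discussion preceding the corollary, and is the single-vertex case of the reasoning in the proof of Lemma~\ref{lem:cut}.) On any instance that survives this check we have, for every $v$, $d(v) \le c(N[v]) \le c^*(\Delta^* + 1)$, since $|N[v]| \le \Delta^* + 1$ and each capacity is at most $c^*$. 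Thus the demand function has maximum value at most $d^* := c^*(\Delta^*+1)$, a quantity depending only on the fixed parameters.

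It then suffices to run the algorithm of Theorem~\ref{thm:main1} with this value $d^*$ and the given $c^*$. For any $\epsilon > 0$ it returns a $(1+\epsilon)$-approximation, and its running time $O(k^2(c^* d^*)^{6k} n + c^* n^2)$ with $k = 4c^*/\epsilon$ remains polynomial in $n$ after substituting $d^* = c^*(\Delta^*+1)$ — indeed it is an EPTAS for fixed $c^*$ and $\Delta^*$ — which establishes the claim. I do not anticipate any real obstacle: the only points requiring care are not omitting the infeasibility pre-check (so that the inequality $d(v) \le c(N[v])$ is valid on the instances actually passed to the PTAS) and checking that the constant bound on $d^*$ enters the runtime of Theorem~\ref{thm:main1} only through the base of the exponential factor, leaving the exponent independent of $\epsilon$.
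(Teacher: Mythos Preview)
Your proposal is correct and follows essentially the same approach as the paper: bound the maximum demand by $d^* \leq c^*(\Delta^*+1)$ via the closed-neighborhood capacity argument (reporting infeasibility otherwise), then invoke Theorem~\ref{thm:main1}. The paper presents this reasoning in the paragraph immediately preceding the corollary rather than as a separate proof, but the content is the same.
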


Another corollary of the theorem is a PTAS for the {\sc Capacitated Vertex Cover} problem (CVCP), defined as follows.  Let $G$ be an undirected graph with edge set $E$ and vertex set $V$.  We define a capacity function $c: V \rightarrow \mathbb{Z}^*$ on the vertices of $G$ and a demand function $d: E \rightarrow \mathbb{Z}^*$ on the edges of $G$.  Under the constraint that edge-demand can only be met by its own endpoints, the objective is to find a set of vertices $S$ of minimum size such that every edge has its demand met by vertices in $S$ and each vertex in $S$ meets a total amount of demand that does not exceed its capacity.  We use $CVCP(G,d,c)$ to denote the {\sc Capacitated Vertex Cover} problem on a graph $G$ with demand function $d$ and capacity function $c$.

We can see that CVCP generalizes the traditional {\sc Vertex Cover} problem in which each vertex has unlimited capacity and each edge has a demand of one.  Furthermore, the {\sc Capacitated Vertex Cover} problem is identical to the {\sc Capacitated Domination} problem with the demand on edges instead of on vertices.  In the following corollary, we show that we can reduce CVCP to CDP to achieve a PTAS for the former.

\begin{cor}\label{thm:main} For any positive integers $d^*$ and $c^*$ there exists a PTAS for the {\sc Capacitated Vertex Cover} problem $CVCP(G,d,c)$ where $G$ is a planar graph, $d$ is a demand function with maximum value $d^*$, and $c$ is a capacity function with maximum value $c^*$.\end{cor}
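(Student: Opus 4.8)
The plan is to reduce $CVCP(G,d,c)$ on a planar graph $G$ to an instance $CDP(G',d',c')$ of {\sc Capacitated Domination} on a planar graph $G'$ with the same parameter bounds, so that Theorem~\ref{thm:main2} applies directly and yields the desired PTAS.

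First I would construct $G'$ by \emph{subdividing} every edge of $G$: for each edge $uv \in E$ introduce a new vertex $w_{uv}$ and replace $uv$ by the two edges $uw_{uv}$ and $w_{uv}v$. Subdivision preserves planarity, so $G'$ is planar. Define $d'(w_{uv}) = d(uv)$ and $d'(v) = 0$ for every original vertex $v \in V$; define $c'(w_{uv}) = 0$ and $c'(v) = c(v)$ for every $v \in V$. Then the maximum demand of $d'$ is $d^*$ and the maximum capacity of $c'$ is $c^*$, so $CDP(G',d',c')$ falls under the hypotheses of Theorem~\ref{thm:main2}.

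The key step is to establish a cost-preserving correspondence between feasible CVCP solutions on $G$ and proper, covering assignments on $G'$. Because $c'(w_{uv}) = 0$, no edge-vertex can be the first coordinate of any pair in a proper assignment, so every proper assignment $\mathcal{A}$ on $G'$ has $S(\mathcal{A}) \subseteq V$. Given such a covering $\mathcal{A}$, each vertex $w_{uv}$ receives exactly $d(uv)$ units from among its only neighbors $u$ and $v$, and each original vertex $v \in S(\mathcal{A})$ sends out at most $c(v)$ units in total (all to edge-vertices incident to $v$); hence $S(\mathcal{A})$, together with the induced split of each edge demand between its endpoints, is a feasible CVCP solution of size $q(\mathcal{A})$. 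Conversely, a feasible CVCP solution $S$ together with its edge-demand split yields, in the obvious way, a proper covering assignment on $G'$ whose dominating set is $S$. Therefore the optima coincide and, moreover, any $(1+\epsilon)$-approximate solution to $CDP(G',d',c')$ maps to a $(1+\epsilon)$-approximate CVCP solution.

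Finally I would check the running time: since $G$ is planar, $|E| = O(|V|)$, so $G'$ has $O(n)$ vertices and edges and the reduction is computable in linear time; composing it with the (E)PTAS of Theorem~\ref{thm:main2} gives a PTAS (indeed an EPTAS) for CVCP. The only points that need care --- and the closest thing to an obstacle --- are (i) justifying that zero-capacity edge-vertices can never be used in a dominating set, which is what forces the assignment to behave like a vertex cover rather than allowing an edge to be ``covered'' by its own subdivision vertex, and (ii) verifying that the parameter bounds $d^*$ and $c^*$ are preserved by the construction so that Theorem~\ref{thm:main2} is applicable.
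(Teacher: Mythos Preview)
Your proposal is correct and follows essentially the same approach as the paper: both subdivide every edge of $G$, place the edge's demand on the new midpoint vertex with zero capacity, give original vertices zero demand and their original capacity, and observe that this yields a planarity- and parameter-preserving instance of CDP whose optima coincide with those of CVCP. Your write-up is in fact more careful than the paper's, explicitly spelling out the two directions of the correspondence and the linear-size blowup.
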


\begin{proof}  Consider the following reduction from {\sc Capacitated Vertex Cover} to CDP.  Let $G'$ be the graph formed by bisecting every edge of $G$. Specifically, we define a new graph $G'$ with edge set $E'$, vertex set $V'$, demand function $d: V' \rightarrow \mathbb{Z}^*$, and capacity function $c: V' \rightarrow \mathbb{Z}^*$, as follows:

For every edge $e = uv \in E$, add edges $uw$ and $wv$ to $E'$ and add vertices $u,v,w$ to $V'$.  Additionally, set capacities to $c'(u) = c(u)$,  $c'(v) = c(v)$, and $c(w) = 0$, and set demands to $d'(w)= d(e)$ and $d'(u)=d'(v) = 0$.  This provides the necessary conditions to solve CDP on $G'$.

It is easy to see that a dominating set of $G'$ exactly corresponds to a vertex cover of $G$.  Setting the bisecting vertex to have zero capacity guarantees that these vertices will not be chosen in the minimum dominating set.  Moreover, maximum capacity $c^*$ and maximum demand $d^*$ are preserved.  Therefore the PTAS for CDP finds an appropriate vertex cover.
\end{proof}

%______________________DP___________________________________

\section{Dynamic Program}\label{sec:dp}
In this section we describe a dynamic programming algorithm for solving the {\sc Capacitated Domination} problem for graphs with bounded branchwidth.

For an input graph $G_0$ with branchwidth $w$, capacity function $c$, and demand function $d$, we use the branch decomposition of $G_0$ to guide the dynamic program.  Consider a cluster, $C$, in the decomposition.  We define functions $f, g: \partial(C) \rightarrow \mathbb{Z}$ such that $\forall v\in \partial(C)$, $0 \leq f(v) \leq c(v)$ and $0 \leq g(v) \leq d(v)$.  We use $f_\emptyset$ and $g_\emptyset$ to denote the special trivial functions for empty cluster boundaries.  We let $C_G$ denote the cluster containing all edges of $G$.  

Our dynamic programming table, $DP$, is indexed by triplets ($C,f,g$) for all such functions for each cluster.  The entry at index ($C,f,g$) is an optimal assignment $\mathcal{A}$ for a modified version of CDP for cluster $C$ such that for vertices at the boundary $\partial(C)$, $f$ and $g$ describe the \emph{exact} amount of capacity used and demand covered by the assignment.  We call this a \emph{restricted assignment}.  The algorithm proceeds level-by-level through the decomposition, starting at the leaves and continuing root-ward (see pseudocode).

We first address the leaves of the decomposition, which correspond to single edges in the graph.  For the leaf cluster of edge $uv$, we consider the following four cases.
\begin{itemize}
\item If $f(v) + f(u) \neq g(u) + g(v)$ there is no solution which we indicate by returning $\{(u,v)^\infty\}$.
\item Otherwise if $f(v) + f(u) = g(u) + g(v) = 0$ then we return $\emptyset$.
\item Otherwise if $f(v) \geq g(v)$ ($v$ has sufficient capacity to cover its own demand) then return\\ $\{(v,v)^{g(v)}, (v,u)^{f(v)-g(v)}, (u,u)^{f(u)}\}$.
\item Otherwise if $f(v) < g(v)$ ($v$ has insufficient capacity to cover its own demand) then return\\ $\{(u,u)^{g(u)}, (u,v)^{f(u)-g(u)}, (v,v)^{f(v)}\}$.
\end{itemize}

Moving root-ward, we show how to compute the restricted assignment for CDP for a parent cluster given the table entries for the child clusters.  Consider a cluster $C_0$ with child clusters $C_1$ and $C_2$.  We say that indices $(C_1, f_1, g_1)$ and $(C_2, f_2, g_2)$ are \emph{compatible} with $(C_0, f_0, g_0)$ if the following hold:

\begin{itemize}
\item $\forall v \in (\partial(C_1) \cap \partial(C_2)) \setminus \partial(C_0)$,  $f_1(v) + f_2(v) \leq c(v)$ and $g_1(v) + g_2(v) = d(v)$
\item $\forall v \in (\partial(C_0) \cap \partial(C_1)) \setminus \partial(C_2)$, $f_0(v)=f_1(v)$ and $g_0(v)= g_1(v)$
\item $\forall v \in (\partial(C_0) \cap \partial(C_2)) \setminus \partial(C_1)$, $f_0(v)=f_2(v)$ and $g_0(v)= g_2(v)$
\item $\forall v \in \partial(C_0) \cap \partial(C_1) \cap \partial(C_2)$, $f_1(v) + f_2(v) = f_0(v)$ and $g_1(v) + g_2(v) = g_0(v)$
\end{itemize}

To determine the entry in the table for index $(C_0, f_0, g_0)$ we search over all compatible pairs $(C_1, f_1, g_1)$ and $(C_2, f_2, g_2)$ with respective entries $\mathcal{A}_1$ and $\mathcal{A}_2$, and find the $\mathcal{A}_1 \uplus \mathcal{A}_2$ such that $q(\mathcal{A}_1 \uplus \mathcal{A}_2)$ is minimized.

Finally we return the assignment stored in $DP(C_G, f_\emptyset, g_\emptyset)$.

\begin{algorithm}
   \caption{Capacitated Domination Problem Dynamic Program}
    \begin{algorithmic}[1]
      \Function{CDP\_DP}{$G_0, c, d$}
      \INPUT{Graph $G_0$ with branchwidth $w$, capacity function $c$, and demand function $d$}
      \OUTPUT{Minimum covering, proper assignment $\mathcal{A}$}
      \State $DP \gets$  array initialized to $\infty$
      \For{$level$ in 0...height of branch decomposition}
      		\For{cluster $C$ in $level$}
      			\For{$f:\partial(C) \rightarrow \mathbb{Z}$, $g:\partial(C) \rightarrow \mathbb{Z}$ with $0 \leq f(v) \leq c(v)$ and  $0 \leq g(v) \leq d(v), \forall v$}
      				  \If{$C$ is a leaf containing edge $uv$}
      						\If{$f(v) + f(u) \neq g(u) + g(v)$}
	      						\State{$DP(C,f,g) \gets \{(u,v)^\infty\}$}
							\ElsIf{$f(v) + f(u) = g(u) + g(v) = 0$} 
								\State{$DP(C,f,g) \gets\emptyset$}
							\ElsIf{$f(v) \geq g(v)$} 
								\State{$DP(C,f,g) \gets \{(v,v)^{g(v)}, (v,u)^{f(v)-g(v)}, (u,u)^{f(u)}\}$}
							\Else
								\State{$DP(C,f,g) \gets \{(u,u)^{g(u)}, (u,v)^{f(u)-g(u)}, (v,v)^{f(v)}\}$}
					      \EndIf
					   \Else
					   		\For{pairs $(C_1,f_1,g_1)$ and $(C_2,f_2,g_2)$ compatible with $(C,f,g)$}
								\If{$q(DP(C_1,f_1,g_1)\uplus DP(C_2,f_2,g_2)) < DP(C,f,g)$}
									\State {$DP(C,f,g) \gets DP(C_1,f_1,g_1)\uplus DP(C_2,f_2,g_2)$}
								\EndIf
							\EndFor
					   \EndIf
		      	\EndFor
      		\EndFor
		\EndFor
      \State \Return $DP(C_G, f_\emptyset, g_\emptyset)$
\EndFunction
\end{algorithmic}
\end{algorithm}

\begin{theorem}\label{thm:dp}
For any positive integers $w$, $d^*$, and $c^*$ there exists an exact polynomial time algorithm for the {\sc Capacitated Domination} problem, $CDP(G,d,c)$ where $G$ is a planar graph with branch decomposition of width $w$, $d$ is a demand function with maximum value $d^*$, and $c$ is a capacity function with maximum value $d^*$.

\end{theorem}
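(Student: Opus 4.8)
The plan is to show that the dynamic program of the previous subsection maintains the following invariant and then to read the answer off its last table entry. \textbf{Invariant:} for every cluster $C$ of the branch decomposition and all $f,g:\partial(C)\to\Z$ with $0\le f(v)\le c(v)$ and $0\le g(v)\le d(v)$, the entry $DP(C,f,g)$ stores a minimum-size \emph{restricted assignment} for $(C,f,g)$ --- an assignment whose pairs use only edges of $C$ (with self-loops allowed at vertices incident to $C$) that exceeds no capacity, meets the full demand of every vertex interior to $C$, and at every boundary vertex $v$ uses exactly $f(v)$ units of capacity and meets exactly $g(v)$ units of demand --- or the sentinel $\{(u,v)^\infty\}$ if no such assignment exists. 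Since $\partial(C_G)=\emptyset$, a restricted assignment for $(C_G,f_\emptyset,g_\emptyset)$ is precisely a proper covering assignment of $G$; so once the invariant holds the theorem follows, the algorithm returning $DP(C_G,f_\emptyset,g_\emptyset)$, a minimum-size proper covering assignment, i.e.\ an optimal CDP solution.

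I would prove the invariant by induction on the clusters, processed bottom-up so that a cluster is treated only after both its children. For the base case a leaf cluster is a single edge $uv$: the only available pairs are $(u,u),(v,v),(u,v),(v,u)$, each contributing one unit of used capacity and one unit of met demand, so any restricted assignment forces $f(u)+f(v)=g(u)+g(v)$, which is exactly the infeasibility test. When it holds, a direct check shows the explicit assignment returned in each remaining case uses exactly $f(\cdot)$ capacity and meets exactly $g(\cdot)$ demand at each endpoint --- the identity $f(u)+f(v)=g(u)+g(v)$ makes the arithmetic close --- and that every restricted assignment for $(C,f,g)$ has the same dominating set $\{x\in\{u,v\}:f(x)>0\}$ and hence the same size, so the stored one is optimal.

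For the inductive step, let $C_0$ have children $C_1,C_2$. The structural facts driving the argument are that every vertex incident to edges of both $C_1$ and $C_2$ lies in $\partial(C_1)\cap\partial(C_2)$, and $\partial(C_0)\subseteq\partial(C_1)\cup\partial(C_2)$, so the two child subproblems share only vertices on their common boundary, and for any restricted assignment $\mathcal{A}_i$ of $C_i$ the membership of such a shared vertex $v$ in $S(\mathcal{A}_i)$ is decided solely by whether $f_i(v)>0$. From these one shows: (i) for any $(C_1,f_1,g_1),(C_2,f_2,g_2)$ compatible with $(C_0,f_0,g_0)$ and any restricted assignments $\mathcal{A}_1,\mathcal{A}_2$ realizing them, $\mathcal{A}_1\uplus\mathcal{A}_2$ is a restricted assignment for $(C_0,f_0,g_0)$ --- the four compatibility conditions are exactly what is needed so that the vertices of $C_0$ that are interior (including those in $\partial(C_1)\cap\partial(C_2)\setminus\partial(C_0)$) get full demand and respected capacity while the $C_0$-boundary values add up correctly; and (ii) conversely, any restricted assignment $\mathcal{A}$ of $C_0$ splits as $\mathcal{A}_1\uplus\mathcal{A}_2$ with $\mathcal{A}_i$ the pairs of $\mathcal{A}$ supported on $C_i$ (a self-loop at a shared boundary vertex may go to either side, which is consistent because moving one such self-loop shifts one unit simultaneously in the matching capacity- and demand-sums), and the induced boundary functions form a compatible pair. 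Since $q(\mathcal{A}_1\uplus\mathcal{A}_2)=q(\mathcal{A}_1)+q(\mathcal{A}_2)-|\{v\in\partial(C_1)\cap\partial(C_2):f_1(v)>0,\ f_2(v)>0\}|$ depends on $\mathcal{A}_1,\mathcal{A}_2$ only through their sizes and the fixed $f_i$, replacing $\mathcal{A}_1,\mathcal{A}_2$ by the inductively optimal child entries can only decrease the union's size; combining this with (i) and (ii) shows that $\min q\big(DP(C_1,f_1,g_1)\uplus DP(C_2,f_2,g_2)\big)$ over compatible pairs --- which is what the algorithm computes --- equals the true optimum for $(C_0,f_0,g_0)$, the $\infty$ sentinel propagating harmlessly through $\uplus$.

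For the running time, the decomposition has $O(|E|)=O(n)$ clusters (recall $G$ is planar), each with $|\partial(C)|\le w$, so at most $\big((c^*+1)(d^*+1)\big)^{w}$ index pairs $(f,g)$ per cluster, and for each at most $\big((c^*+1)(d^*+1)\big)^{2w}$ compatible child index pairs are considered. Storing in each entry only the size $q$, the boundary footprint $\{v\in\partial(C):f(v)>0\}$, and back-pointers to the two child entries used keeps the work per pair $\mathrm{poly}(w)$ (the union's size is read off the footprints) and lets an optimal assignment be reconstructed in $O(n)$ time at the end; the total is $O\big(n\cdot\mathrm{poly}(w)\cdot((c^*+1)(d^*+1))^{3w}\big)$, polynomial for fixed $w,c^*,d^*$, as claimed. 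The main obstacle is isolating and proving cleanly the ``memorylessness'' of merges --- that a subproblem solution's trace on a shared boundary is a function of its index alone --- since this is exactly what justifies keeping only one optimal assignment per table entry; the accompanying subtlety is the correct bookkeeping of self-loops when splitting a solution across a cluster, while the remaining verifications are routine case analysis and counting.
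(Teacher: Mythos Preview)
Your proposal is correct and follows essentially the same approach as the paper: induction on the branch decomposition, a case analysis at the leaves, and for an internal cluster the two directions (compatible child entries yield a valid restricted assignment; any optimal restricted assignment splits into a compatible pair), combined with the observation that boundary membership in the dominating set is determined by $f$ alone. Your explicit formula $q(\mathcal{A}_1\uplus\mathcal{A}_2)=q(\mathcal{A}_1)+q(\mathcal{A}_2)-|\{v\in\partial(C_1)\cap\partial(C_2):f_1(v)>0,\ f_2(v)>0\}|$ is a sharper articulation of what the paper phrases as ``agree on the dominating vertices on the boundaries,'' but the argument is the same.
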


\begin{proof}
We claim that the dynamic program described above is such an algorithm. We prove correctness using induction on the branch decomposition.

Our base cases are the leaves of the decomposition.  Let edge $uv$ be such a leaf.  We consider the same four cases addressed in the algorithm description.
\begin{itemize}
\item If $f(v) + f(u) \neq g(u) + g(v)$ the algorithm returns $\{(u,v)^\infty\}$ because the cluster contains a single edge so the total capacity used must equal the total demand covered for the cluster to have a feasible solution.
\item Otherwise if $f(v) + f(u) = g(u) + g(v) = 0$ the algorithm returns $\emptyset$ since there is no demand that needs to be met and no capacity available.
\item Otherwise if $f(v) \geq g(v)$ the algorithm returns $\{(v,v)^{g(v)}, (v,u)^{f(v)-g(v)}, (u,u)^{f(u)}\}$.  Since $f(v) + f(u) = g(u) + g(v)$, the demand from $v$ and $u$ is covered and the exact capacity is used. Note that if $f(v) = 0$ or $f(u) = 0$ the size of the dominating set for the returned assignment is one, otherwise the size is two.  Such an assignment is therefore a minimum.
\item Otherwise $f(v) < g(v)$, and the algorithm returns $\{(u,u)^{g(u)}, (u,v)^{f(u)-g(u)}, (v,v)^{f(v)}\}$. This case is symmetric to the previous one.
\end{itemize}

To show the inductive step, we consider a cluster $C_0$ and assume that the restricted assignment for CDP has been solved optimally by the dynamic program for its child clusters $C_1$ and $C_2$.  For index $(C_0, f_0, g_0)$ the algorithm searches over all compatible pairs $(C_1, f_1, g_1)$ and $(C_2, f_2, g_2)$ with respective entries $\mathcal{A}_1$ and $\mathcal{A}_2$ and returns the $\mathcal{A}_1 \uplus \mathcal{A}_2$ such that $q(\mathcal{A}_1 \uplus \mathcal{A}_2)$ is minimized.  Let $\mathcal{A}_1^* \uplus \mathcal{A}_2^*$ be such a minimum where $\mathcal{A}_1^*$ is the entry for $(C_1^*, f_1^*, g_1^*)$ and $\mathcal{A}_2^*$ is the entry for $(C_2^*, f_2^*, g_2^*)$.

To show that $\mathcal{A}_1^* \uplus \mathcal{A}_2^*$ is an optimal restricted assignment for $(C_0, f_0, g_0)$, we must show that boundary restrictions are met and that it is proper, covering, and minimum.  

By definition of compatible, boundary restrictions are automatically satisfied because every vertex on the boundary of $C_0$ must appear on the boundary of at least one of its children and compatibility forces agreement between the child (or sum of children for vertices on all three boundaries) and the parent.  

Since $\mathcal{A}_1^*$ and $\mathcal{A}_2^*$ are each proper, and compatibility requires that $f_1^*(v) + f_2^*(v)$ is at most $c(v)$ for each vertex $v$ on the shared boundary, $\mathcal{A}_1^* \uplus \mathcal{A}_2^*$ must be proper.
Similarly, since $\mathcal{A}_1^*$ and $\mathcal{A}_2^*$ are each covering, and compatibility requires that $g_1^*(v) + g_2^*(v)$ equals $d(v)$ for each vertex $v$ on the shared boundary, $\mathcal{A}_1^* \uplus \mathcal{A}_2^*$ must be covering as well.

Let $\mathcal{A}_0^*$ be a minimum restricted assignment for $(C_0, f_0, g_0)$.  Partition $\mathcal{A}_0^*$ into $X$ and $Y$ as follows:
\begin{itemize}
\item For $(v,v)\in \mathcal{A}_0^*$, with $v \in \partial{(C_1)} \cap \partial{(C_2)}$, put $(v,v)$ into either $X$ or $Y$ arbitrarily.
\item Otherwise for $(v,v)\in \mathcal{A}_0^*$, put $(v,v)$ in $X$ if $v$ is the endpoint of an edge in $C_1$ and in $Y$ if $v$ is the endpoint of an edge in $C_2$.
\item Otherwise for $(u,v)\in \mathcal{A}_0^*$, put $(u,v)$ in $X$ if edge $uv$ is in $C_1$ and in $Y$ if edge $uv$ is in $C_2$.
\end{itemize}

Clearly, every pair in assignment $\mathcal{A}_0^*$ must be in exactly one of $X$ and $Y$ so $\mathcal{A}_0^* = X \uplus Y$.

Let $f_x: \partial(C_1) \rightarrow \mathbb{Z}$ be the function that maps boundary vertices of $C_1$ to the capacity utilized by these vertices in $X$, and $g_x: \partial(C_1) \rightarrow \mathbb{Z}$ be the function that maps boundary vertices of $C_1$ to the demand from these vertices covered in $X$.  By construction, $X$ is now a restricted assignment for $(C_1, f_x, g_x)$.  Our inductive hypothesis gives $q(DP(C_1, f_x, g_x)) \leq q(X)$. Symmetrically, defining $f_y$ and $g_y$ similarly for $\partial(C_2)$ gives $q(DP(C_2, f_y, g_y)) \leq q(Y)$.  

$DP(C_1, f_x, g_x)$ and $X$ agree on the dominating vertices on the boundaries, as do $DP(C_2, f_y, g_y)$ and $Y$, so the above inequalities ensure
$$q(DP(C_1, f_x, g_x) \uplus DP(C_2, f_y, g_y)) \leq q(X \uplus Y) = q(\mathcal{A}_0^*)$$. 
Since $(C_1, f_x, g_x)$ and $(C_2, f_y, g_y)$ are compatible with $(C_0, f_0, g_0)$, $DP(C_1, f_x, g_x) \uplus DP(C_2, f_y, g_y)$ will be considered by the algorithm when finding the minimum such union. Therefore $q(\mathcal{A}_1^* \uplus \mathcal{A}_2^*) \leq q(DP(C_1, f_x, g_x) \uplus DP(C_2, f_y, g_y)) \leq q(\mathcal{A}_0^*)$ and is thus minimum.

Since the branch decomposition gives a binary tree with $O(n)$ leaves, there are $O(n)$ clusters.  For each cluster, $C$, the boundary $\partial(C)$ has at most $w$ vertices.  For each vertex $v \in \partial(C)$ there are $c(v)$ options for the value of $f$ and $d(v)$ options for the value of $g$.  There are thus at most ${c^*}^w$ functions $f$ and ${d^*}^w$ functions $g$ for each cluster boundary.  This gives $O((c^*d^*)^w n)$ indices of the DP table. Each entry requires comparing unions of entries from all compatible pairs. There are $O({c^*}^{2w}{d^*}^w)$ such pairs since knowing $f$ for one child and the parent completely determines $f$ for the other child.  Checking for compatibility requires $O(w)$ time.  This gives the algorithm a total runtime of $O(w{c^*}^{3w}{d^*}^{2w}n)$

\end{proof}

%______________________Conclusion___________________________________

\section{Conclusion}\label{sec:conclude}

Naturally we wonder whether the techniques developed in this paper can be extended.  Most immediately, can we accommodate arbitrary integral or bounded real-valued demands and capacities?  Alternatively, can we extend the solution to the weighted case?  The main challenge to these extensions is that our technique barters capacity for weight in unit quantities.  While we can guarantee that there is some underfull vertex in the graph, we cannot guarantee a bound on the weight of this vertex.  One idea to address this is to design a smarter smoothing process that considers the weights of the vertices chosen in the search.  Alternatively a careful rounding strategy may lead to a bicriteria result in which either the capacities can be exceeded or the demands can be under-covered by a small (bounded) amount.  

Additionally, we consider whether we can extend these techniques to similar problems.  One such problem is capacitated $r$-hop domination, in which a vertex can cover any vertex within a distance of $r$.  This generalizes CDP (in which $r=1$).  The PTAS presented in Section~\ref{sec:ptas} seems to extend to $r$-hop domination by widening the size of our patches to $2(r+1)$ and where $G'$ of Lemma~\ref{lem:improve} has an edge $(u,v)$ for every pair of vertices $u,v$ within a hop-distance of $r$ in $G$.  However, the dynamic program given in Section~\ref{sec:dp} does not readily extend to $r$-hop domination: demand of a non-boundary vertex in a cluster may be covered by a vertex outside of the cluster and encoding all possible ways in which demand can traverse the cluster boundary seems to generate too many configurations.

We hope to generalize the techniques introduced here to apply to a broader range of (hard)-capacitated problems and other problems that forbid multiplicities.

\subsection*{Acknowledgements}
Research supported by NSF Grant CCF-1409520.
  
\bibliographystyle{plain}
\bibliography{main}

\end{document}